\DeclareMathOperator{\atantwo}{atan2}
\newtheorem{theorem}{Theorem}
\newtheorem{lemma}{Lemma}
\title{\LARGE \bf
Modelling and Dynamic Tracking Control of Industrial Vehicles with Tractor-trailer Structure
}
\author{Hongchao~Zhao, Zhe~Liu, Zhiqiang~Li, Shunbo~Zhou, Wen~Chen, Chuanzhe~Suo, and Yun-Hui~Liu
\thanks{This work is supported in part by the NSFC under Grant U1613218, in part by the Hong Kong ITC under Grant ITS/448/16FP, in part by the National Key Research and Development Program of China under Grant 2018YFB1309300, and in part by the VC Fund 4930745 of the CUHK T Stone Robotics Institute.}%
\thanks{All the authors are with the T Stone Robotics Institute and Department of Mechanical and Automation Engineering, The Chinese University of Hong Kong, Shatin, HKSAR. Corresponding authors: Z. Liu and H. Zhao (zheliu@cuhk.edu.hk; hongchao.zhao@link.cuhk.edu.hk).}%
}
\begin{document}

\maketitle
\thispagestyle{empty}
\pagestyle{empty}

\begin{abstract}
Existing works on control of tractor-trailers systems only consider the kinematics model without taking dynamics into account. Also, most of them treat the issue as a pure control theory problem whose solutions are difficult to implement. This paper presents a trajectory tracking control approach for a full-scale industrial tractor-trailers vehicle composed of a car-like tractor and arbitrary number of passive full trailers. To deal with dynamic effects of trailing units, a force sensor is innovatively installed at the connection between the tractor and the first trailer to measure the forces acting on the tractor. The tractor's dynamic model that explicitly accounts for the measured forces is derived. A tracking controller that compensates the pulling/pushing forces in real time and simultaneously drives the system onto desired trajectories is proposed. The propulsion map between throttle opening and the propulsion force is proposed to be modeled with a fifth-order polynomial. The parameters are estimated by fitting experimental data, in order to provide accurate driving force. Stability of the control algorithm is rigorously proved by Lyapunov methods. Experiments of full-size vehicles are conducted to validate the performance of the control approach.
\end{abstract}

\section{INTRODUCTION}
Cargo transportation is one of the major tasks in big warehouses, railway stations, cargo terminals, airports, etc. A tractor pulling multiple passive trailers has been widely utilized for such tasks because of its higher efficiency and lower cost, compared to a group of individual vehicles. An autonomous tractor-trailers vehicle can further increase the productivity and give a perfect solution to the serious shortage of human drivers and their rapidly growing wages. However, these systems with payloads on trailers pose a complex nonlinear and underactuated dynamic control problem.

Numerous works in the literature have been devoted to this problem. Different control methods are proposed depending on the type of trailer hitching. A hitching is called ``on-axle" when the rotary hitching joint is located at middle point of the trailer's wheel axles (points marked A in Fig. \ref{realandgeomodel}). While a hitching is called ``off-axle" when the hitching joint is located somewhere between two consecutive trailers (points marked B in Fig. \ref{realandgeomodel}). Off-axle Kingpin hitching is broadly deployed in real applications thanks to its simpler mechanics, though the control problem is more difficult than that of on-axle joints. In our case, both on-axle and off-axle joints exist in the vehicle system (Fig.
\ref{realandgeomodel}).

Early contributions to tractor-trailers vehicle control mainly focused on on-axle hitching systems using nonlinear control theory. Laumond \cite{1} proved the controllability of on-axle n-trailer vehicles with tools from differential geometry. Tilbury \textit{et al.} \cite{2} proposed a Goursat canonical form, which is dual to the chained form, to model a class of driftless non-holonomic systems including the tractor with N trailers systems. Sordalen and Wichlund \cite{3} then achieved the exponential stability using exact feedback linearization with a time-varying state feedback control law, which provided a theoretical basis for stabilizing the system at a goal pose or a path with constant curvature. Lamiraux and Laumond \cite{4} applied the method for global obstacle-free path planning and trajectory following, but the results were quite complex. On the other side, however, these feedback linearization techniques cannot be applied for systems with two or more off-axle trailers, since the kinematics loses the property of differential flatness and becomes not feedback linearizable (Rouchon \textit{et al.} \cite{5}, Bushnell \cite{6}).  This is why Lizarraga \textit{et al.} \cite{7} utilized some configurations that are able to be approximated locally by a chained form such that exponential stabilization can be achieved. Bolzern \textit{et al.} \cite{8} approximated the kinematics of off-axle connection by an on-axle system model which has similar steady response and model differences are regarded as disturbances. Alternatively, control of off-axle tractor-trailers systems has also been approached by means of off-tracking analysis, i.e., analysis of trailer's tracking deviation from the tractor's path. This is mainly due to the property of open-loop stability. Different from the on-axle case, the trailers' tracking errors do not necessarily grow with increasing number of units (Lee \textit{et al.} \cite{9}). In this sense, Bushnell \textit{et al.} \cite{10} investigated the off-tracking bounds in transition from a beeline to an arc of a circle and vice versa. This enables conventional obstacle-avoidance path planners to be applied as if the tractor-trailers system is an enlarged tractor \cite{11}. It is easy to observe that existing works on control of tractor-trailers systems only consider the kinematics model without taking dynamic effects into account. Besides, most of them treat the issue as a pure control theory problem whose solutions are difficult to implement.

\begin{figure}[!t]
	\centering
	\subfigure{\includegraphics[width=0.49\columnwidth]{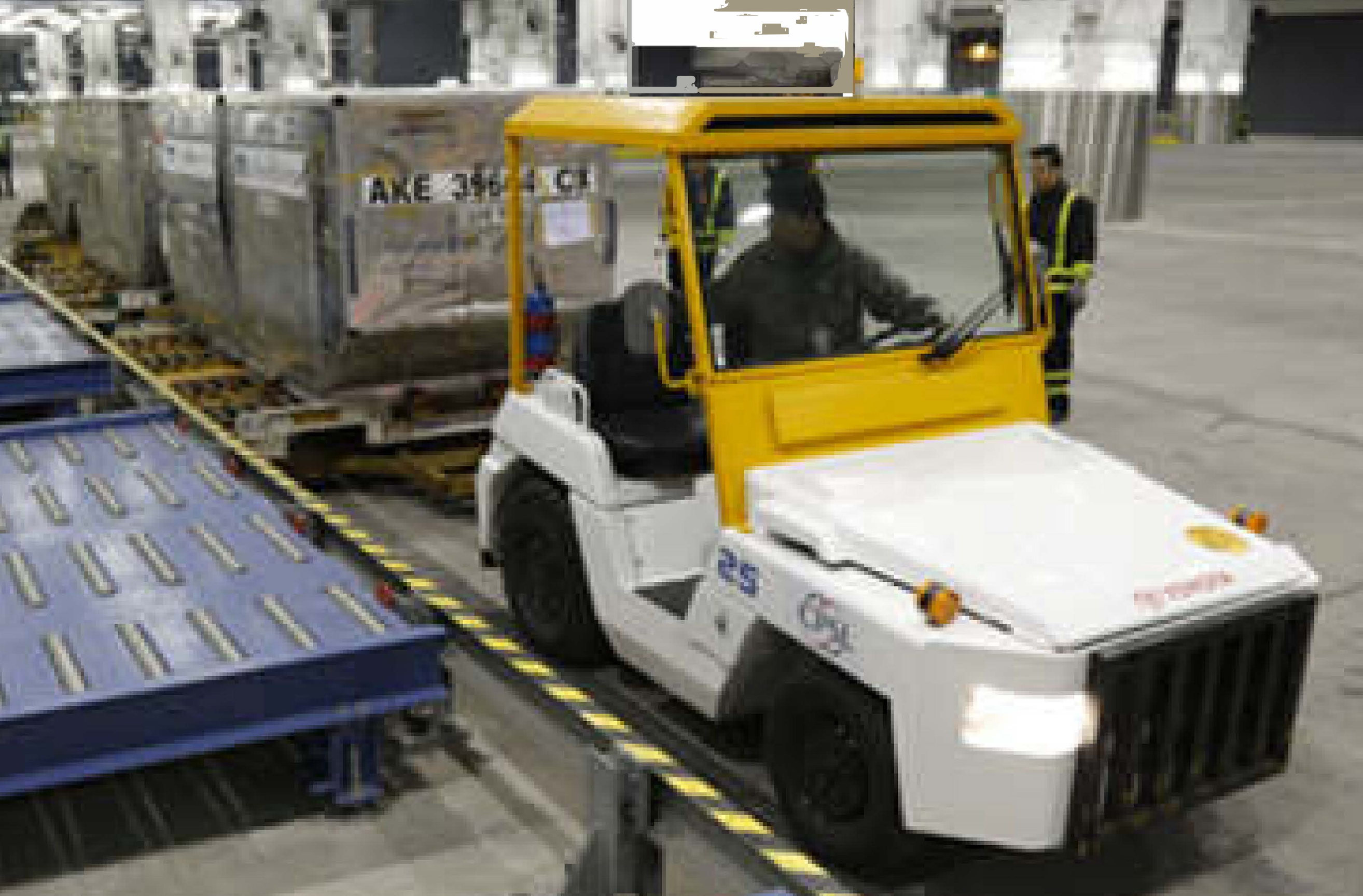}\label{fig:plane}}
	\subfigure{\includegraphics[width=0.49\columnwidth]{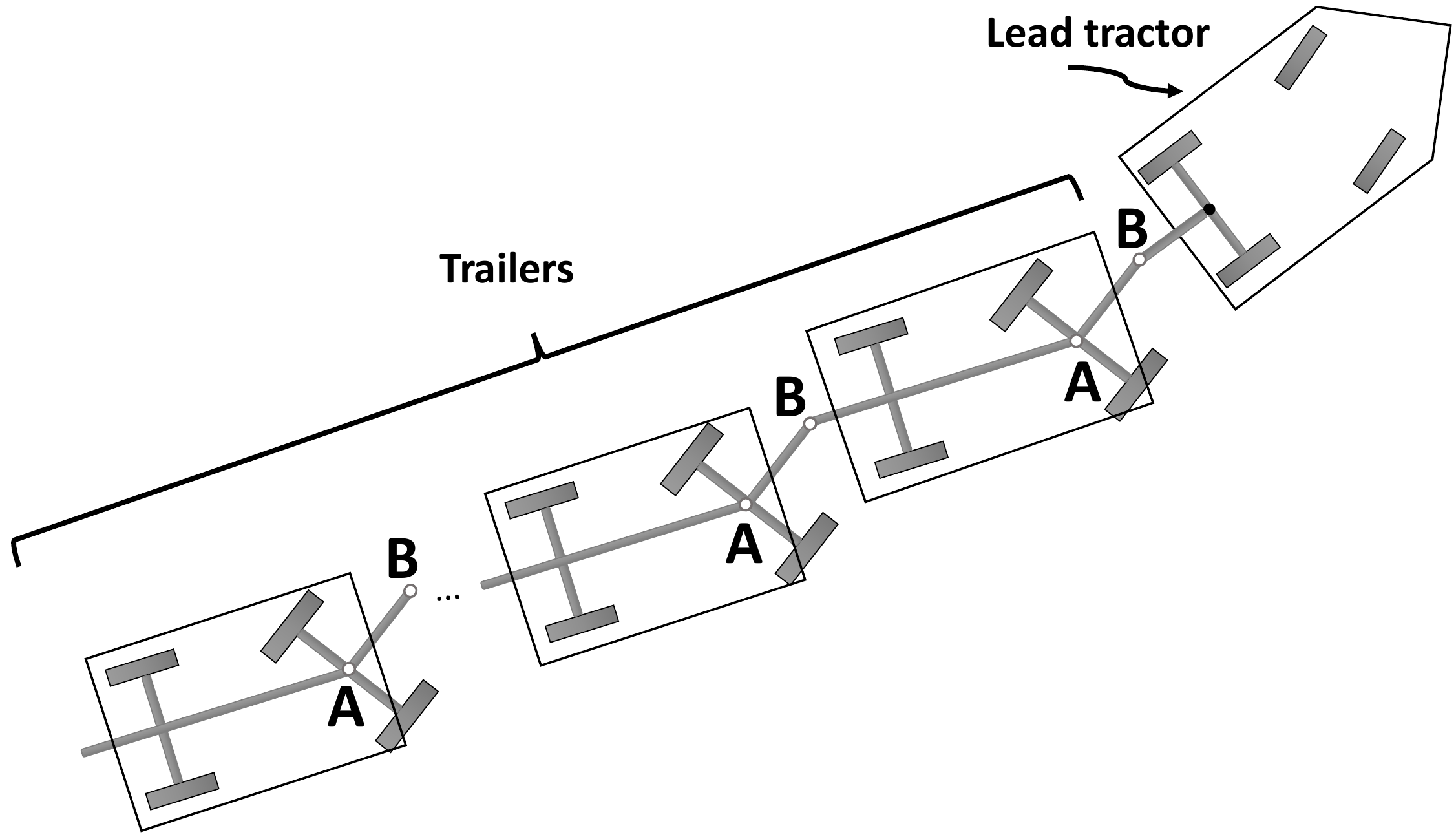}\label{fig:dist_it}}\\
	\vspace{-0.2cm}
	\caption{Tractor-trailers vehicle and its geometric model.}
	\label{realandgeomodel}
\end{figure}

In our work, we treat the on-axle hitching in the system as an off-axle hitching whose front connection bar has zero length. Partially using the ideas and methods from \cite{11}, trajectories that ensure safety handling of dynamic industrial scenarios can then be planned. Though the planned trajectory is for the lead tractor, it can simultaneously guarantee the whole system to be collision-free and executable for time-critical maneuvers. Therefore, the control issue of the tractor-trailers system can be simplified to the control problem of the single lead tractor. Since this paper focuses on trajectory control, details for trajectory planning will not be presented and we assume the reference trajectory is given in our problem.

To the best of our knowledge, this paper is the first to report trajectory control of full-size industrial tractor-trailers vehicles. The novelty of this work manifests in two aspects.

Firstly, a two-dimensional force sensor is installed at the connection (rightmost B point in Fig. \ref{realandgeomodel}) between the tractor and the first trailer. The tractor's dynamic model that explicitly accounts for the measured forces is introduced. A trajectory tracking controller is proposed to compensate the forces in real time and simultaneously drive the tractor onto the desired trajectory. In contrast to existing works, by reflecting trailers' motion effects onto the real-time force measurements, our approach considers the complex dynamic effects introduced by trailers, such as varying number and diverse configurations of and different payloads on the trailers. Moreover, since the lead tractor is the final object to be controlled, the controller does not require knowledge of trailers' dynamic parameters (mass, moment of inertia, etc.) or the unavailable states of the trailers.

Secondly, though the control of wheeled mobile robots has always been a hot topic for numerous research activities (see \cite{12},\cite{13},\cite{18},\cite{20},\cite{22}), real-world implementation of trajectory tracking for full-scale vehicles is barely seen, let alone for the tractor-trailers systems. The main challenge is to achieve high precision of acceleration and deacceleration commands. Hence, our second novelty arises from applying a fifth-order polynomial to model the map between propulsion force and throttle opening and velocity. The parameters are estimated by fitting the experimental data in order to provide accurate driving force for propulsion of the system and compensation of trailer dynamics, when directly controlling the throttle. 


The paper is organized as follows. Section II presents the dynamic model of the lead tractor. Section III details the controller design and stability analysis. Section IV shows the practical implementation on the full-size industrial tractor-trailers vehicle. Section V concludes the work.

\section{Dynamic Modeling}
The tractor is a rear-drive, front-steer car-like vehicle. To derive its dynamics, the nonlinear one-track model \cite{14} is employed where the front and rear wheels are respectively replaced by an intermediate wheel in the middle, see Fig. \ref{nonholonomicModel}. It is assumed that the motion is planar and the height of Center of Gravity (COG) is zero, thus the road bank and grade are not considered and the roll and pitch dynamics are neglected. The rolling resistance and aerodynamics drag are also assumed to be negligible. In this sense, the tractor is subject to the driving force $F_d$ that is modeled to act at center of the rear wheel, two lateral slip forces $F_r$, $F_f$ applied perpendicular to the wheels and two measurable forces $H_x$ and $H_y$ that are exerted on the tractor at the off-axle hitch point by trailing trailers. Be noted that the hitch joint does not transmit torque. Applying the Newton-Euler method, we obtain the following dynamic equations.
\begin{figure}[!t]
\centering
\includegraphics[width=0.9\columnwidth]{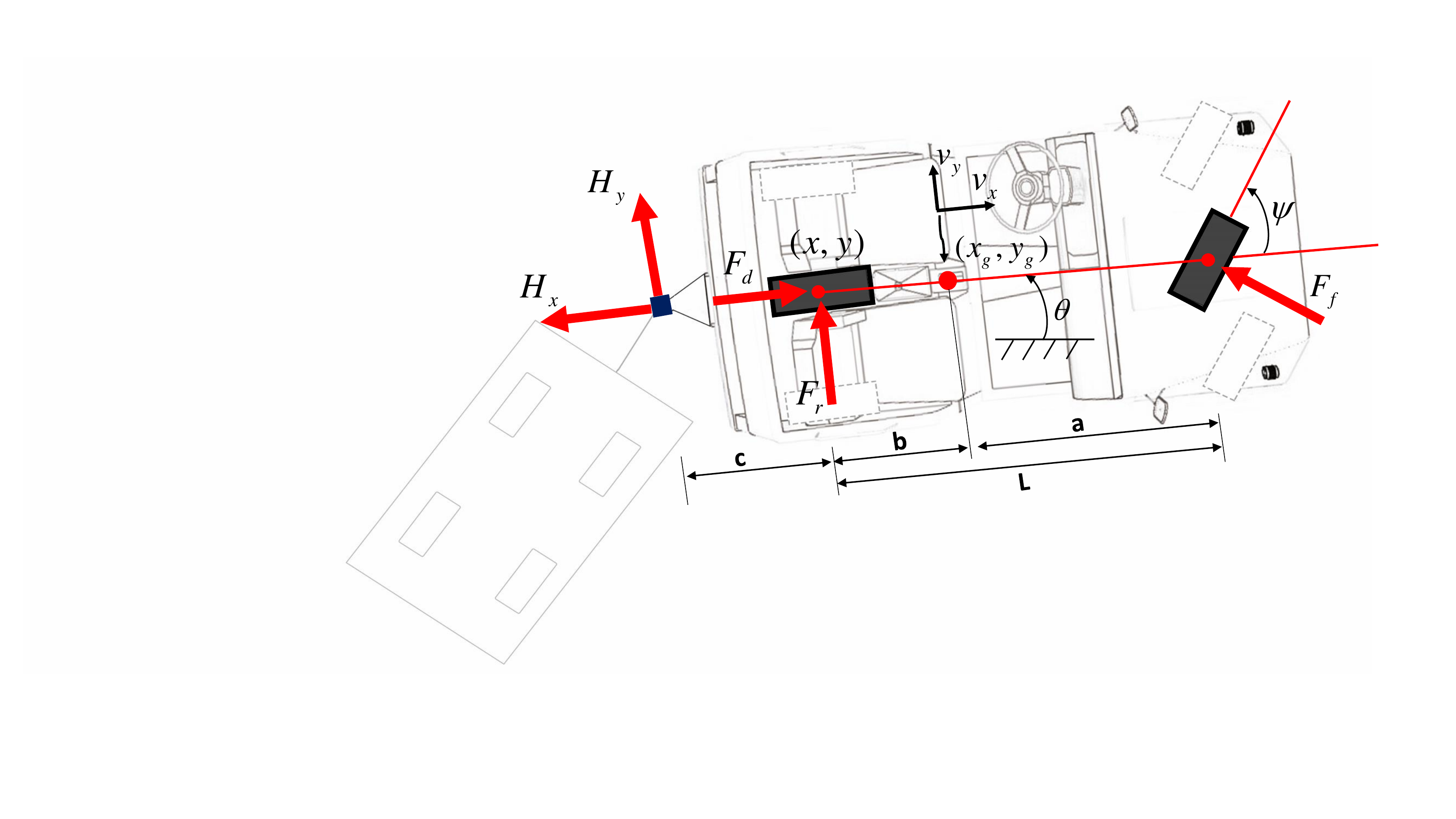}
\caption{Nonholonomic tractor model with force measurements.}
\label{nonholonomicModel}
\vspace{-0.5cm}
\end{figure}
\begin{align}
m\dot{v}_x & = F_d -F_f\sin\psi - H_x + mv_y\dot{\theta} \label{NE_dynamics1} \\
m\dot{v}_y & = F_r + F_f\cos\psi + H_y - mv_x\dot{\theta} \label{NE_dynamics2}\\
J\ddot{\theta} & = aF_f\cos\psi - bF_r - (b+c)H_y \label{NE_dynamics3},
\end{align}
where $m$ is the tractor mass, $J$ is the moment of inertia, $v_x$, $v_y$ are longitudinal and lateral velocity (in tractor's local frame) at COG. $\psi$ is the steering angle, $\theta$ is tractor's yaw angle, $a$, $b$ are distances from COG to front wheel and rear wheel respectively and $c$ is the distance between the force sensor and the rear wheel. The position of COG is denoted as $(x_g,y_g)$ in the inertial frame.
For our application, considering the required low speed of the tractor ($\leq8$ km/h when towing trailers in industrial scenarios), slippage-free condition can always be assumed to hold at the wheels \cite{14}. Hence, two nonholonomic constraints (\ref{nonholo_cstr1}), (\ref{nonholo_cstr2}) respectively for rear and front wheels can further be introduced.
\begin{align}
\dot{y}_g \cos \theta - \dot{x}_g \sin \theta  & =   b \dot{\theta}
\label{nonholo_cstr1} \\
\dot{y}_g \cos (\theta + \psi) - \dot{x}_g \sin (\theta + \psi)  & = - a \dot{\theta} \cos \psi
\label{nonholo_cstr2}
\end{align}
The rotational transformation between velocity in local and inertial frame is as below.
\begin{equation}\label{rt}
\begin{pmatrix} \dot{x}_g\\ \dot{y}_g\end{pmatrix}=
\begin{pmatrix} \cos\theta& -\sin\theta\\ \sin\theta& \cos \theta \end{pmatrix} \begin{pmatrix} v_x \\ v_y\end{pmatrix}
\end{equation}
Substituting (\ref{rt}) into (\ref{nonholo_cstr1}), (\ref{nonholo_cstr2}) yields (\ref{vel_rel}) and it's differentiation (\ref{vel_diff}).
\begin{equation}
v_y = b\dot{\theta}, \quad \dot{\theta}=\frac{v_x\tan\psi}{L}
\label{vel_rel}
\end{equation}
\begin{equation}
\dot{v}_y = b\ddot{\theta}, \quad \ddot{\theta}=\frac{\dot{v}_x\tan\psi}{L} + \frac{\dot{\psi}v_x}{L\cos^2\psi}
\label{vel_diff}
\end{equation}
Combining (\ref{NE_dynamics1}) - (\ref{vel_diff}) and choosing $\begin{bmatrix}x_g&y_g&\theta&v_x\end{bmatrix}$ as part of the states, we can get the simplified vehicle dynamic equations (\ref{final_dyn1}) - (\ref{final_dyn4}):
\begin{align}
\dot{x}_g & =v_x \begin{bmatrix}\cos \theta - \frac{b\sin\theta \tan\psi}{L}\end{bmatrix}\label{final_dyn1}
 \\
\dot{y}_g & =v_x \begin{bmatrix}
  \sin \theta + \frac{b\cos\theta \tan\psi}{L} \end{bmatrix}
\label{final_dyn2} \\
\dot{\theta} & = v_x \frac{\tan \psi}{L}    \label{final_dyn3} \\
\dot{v}_x & = \varphi_1 + \varphi_2 (F_d - H_x) - \varphi_3 H_y  \label{final_dyn4}
\end{align}
with
\begin{align}
\varphi_1 &= -(1/Z)(mb^2 + J)\tan\psi \dot{\psi} v_x  \nonumber
\\
\varphi_2 &= (1/Z)L^2\cos^2\psi  \nonumber \\
\varphi_3 &= (1/Z)L^2c \sin \psi \cos \psi \nonumber \\
Z  &= \cos^2\psi [L^2m + (mb^2 + J)\tan^2\psi ]
\end{align}
Furthermore, to ensure and improve the capability of trajectory tracking control, it is essential to model the actuation process. The dynamic behavior of the steering actuator is assumed, and experimentally proven, to be well-captured by a first-order lag element
\begin{equation}\label{steermodel}
\dot{\psi} = (u_2 - \psi) \frac{1}{\tau},
\end{equation}
where $\tau$ denotes the time constant and $u_2$ is the control input for steering.
The driving force $F_d$ is either the propulsion force $F_p$ transmitted to the rear wheel from the engine through driveline or braking force $F_b$ generated by brake actuation. The engine torque is normally modeled as being proportional to the throttle opening $u_1$ and also being a second-order polynomial w.r.t the engine speed \cite{15}, \cite{16}. The engine speed can also be proportionally related to vehicle speed $v_x$, when assuming that the torque converter in the tractor is locked and that there is no slippage at the wheels \cite{17}. The propulsion force can be further assumed to have a proportional relationship with engine torque \cite{15}. In light of these ideas and also considering the inherently complex nonlinear mapping of the tractor's driveline, we propose the fifth-order-polynomial propulsion map (\ref{engineMap}) between the propulsion force and the throttle opening and vehicle velocity. The coefficients $\boldsymbol{\Phi}$ are estimated by fitting experimental data. As shown in Fig. \ref{enginemap}, each of the black curves corresponds to a set of experimental force-speed data under a certain level of throttle opening. The throttle opening value is set from 0 to 300 and a few samples between are chosen to generate different black curves. The velocity is less than 5 $m/s$, which is the normal operation range for industrial tractors. The propulsion force is achieved in real time with an EKF force estimator combining IMU readings, encoders on the steering wheel and driven wheels and the vehicle dynamic model. The fifth-order polynomial is found to be the best candidate to fit the data as a propulsion map in the sense of simplicity and accuracy, compared to polynomials of other orders. The colored surface in Fig. \ref{enginemap} is the resultant polynomial map after data fitting. 
\begin{figure}[]
	\centering
	\includegraphics[width=0.8\columnwidth]{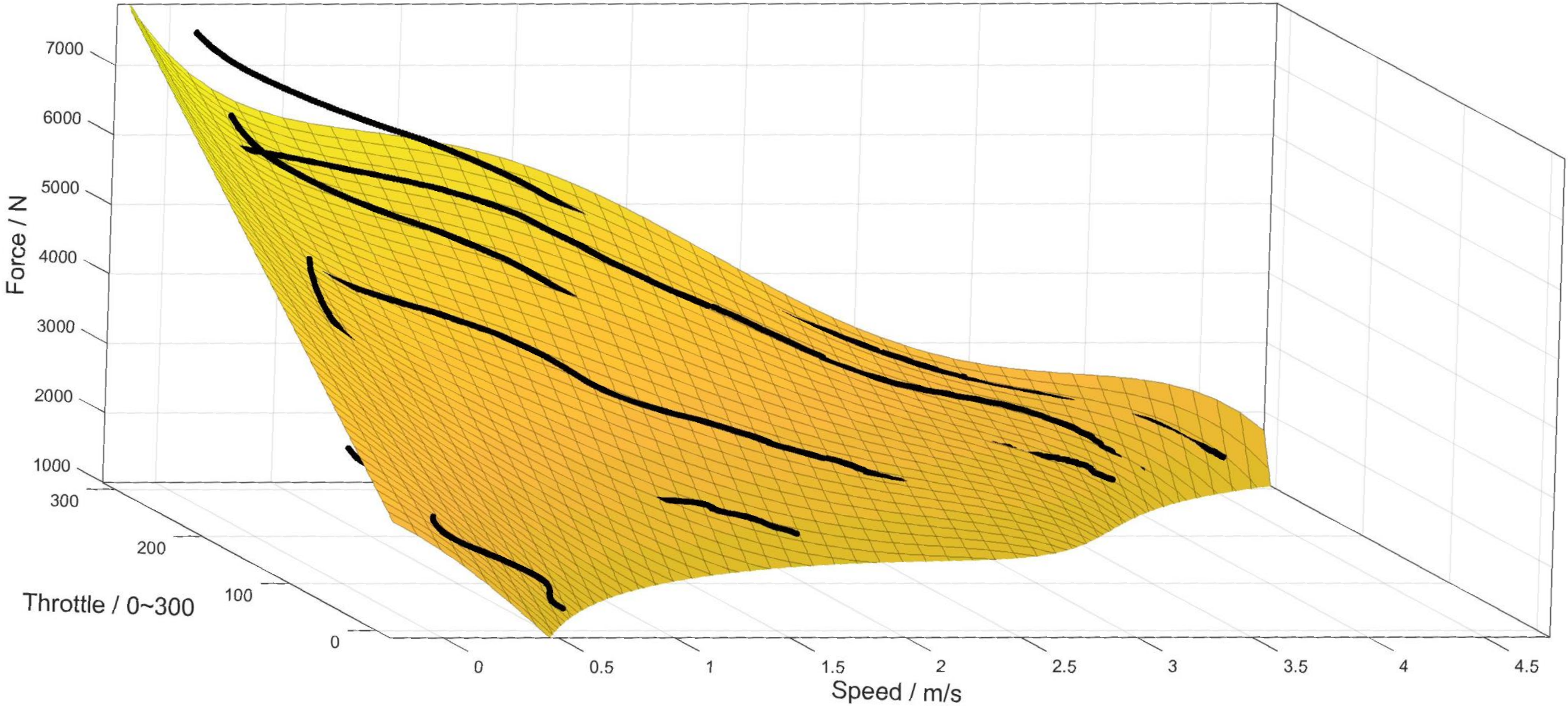}
	\caption{Map between propulsion force and throttle opening and velocity.}
	\label{enginemap}
	\vspace{-0.5cm}
\end{figure}

\begin{equation}
F_p = u_1\begin{bmatrix}1 & v_x & v_x^2 & v_x^3 & v_x^4 & v_x^5 \end{bmatrix} \begin{bmatrix} \beta_1 \\ \beta_2 \\ \beta_3 \\ \beta_4 \\ \beta_5 \\ \beta_6  \end{bmatrix} = u_1\boldsymbol{V}^T\boldsymbol{\Phi} \label{engineMap}
\end{equation}
The braking force $F_b$ generated by brake actuation is modeled to be proportional to the available control input of braking pressure $u_3$.
\begin{equation}\label{brakeModel}
	F_b = u_3 n_b
\end{equation}
with $n_b$ the constant ratio.
Finally, equations (\ref{final_dyn1}) - (\ref{brakeModel}) constitute the complete dynamic model of the tractor.

\section{Nonlinear Trajectory Control}

\subsection{Trajectory Representation}
Since the dynamic equations for heading (\ref{final_dyn3}), velocity (\ref{final_dyn4}), steering (\ref{steermodel}) and driving force (\ref{engineMap})(\ref{brakeModel}) do not depend on the position of the tractor, the coordinates of interest $(x_g,y_g)$ can be redefined as being the center point $(x,y)$ of the tractor's rear axle. Therefore, equations (\ref{final_dyn1}) and (\ref{final_dyn2}) can be rewritten as
\begin{align}
\dot{x} & = v_x \cos \theta\\
\dot{y} & = v_x \sin \theta \label{centerrealaxis}
\end{align}
The trajectory planning is then to continuously parameterize the desired position and orientation of the point $(x,y)$. As mentioned, the reference trajectory is planned for the tractor, and it also respects collision-free conditions for the whole tractor-trailers system. Besides, the planned trajectory is required to guarantee sufficient smoothness and account for actuation limitations introduced by the structure of the whole system. The reference trajectory is assumed to be given and is formulated as follows
\begin{align}
\boldsymbol{q}_d(t) & = [x_d(t),y_d(t),\theta_d(t),v_d(t)]^T\nonumber \\
\dot{\boldsymbol{q}}_d(t) & =[ \dot{x}_d(t),  \dot{y}_d(t) , \dot{\theta}_d(t) , \dot{v}_d(t)]^T
\end{align}
with
$$v_d = \sqrt{\dot{x}_d^2(t) + \dot{y}_d^2(t)}, \quad \theta_d = \atantwo(\dot{y}_d(t),\dot{x}_d(t)) \nonumber$$
It should be noted that, to avoid the jack-knife phenomenon in tractor-trailers system, the desired velocity $v_d$ is always positive.

\begin{figure}[!t]
\centering
\includegraphics[width=0.8\columnwidth]{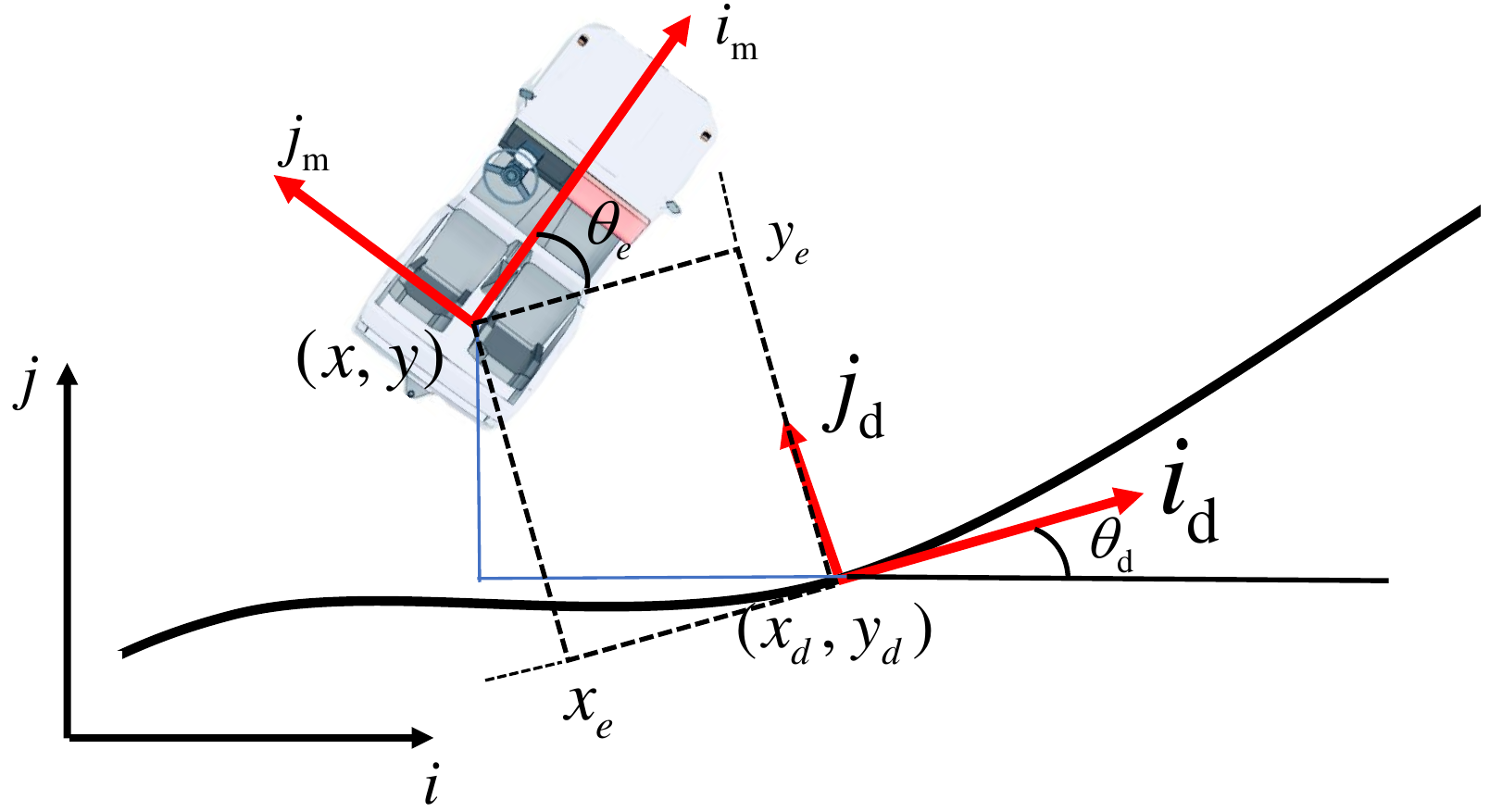}
\caption{Transformation of error coordinates.}
\label{transformation}
\vspace{-0.3cm}
\end{figure}

\subsection{Switching Policy between Throttle and Brake Control}
In our work, the control input for driving force is exclusively throttle opening rate $u_1$ or braking pressure $u_3$, which means only one type of input applies in a control cycle to generate $F_d$. The switch between throttle and brake depends on the sign of the desired value $\omega_2$ of the driving force, which is provided by the nonlinear control law presented in the following subsection. The switch policy is given by
\begin{equation}\label{switchPolicy}
 F_d = \left\{
\begin{array}{c l}	
u_1\boldsymbol{V}^T\boldsymbol{\Phi} &  \omega_2 \geqslant  0\\
u_3n_b & \omega_2 < 0
\end{array}\right.
\end{equation}
\subsection{Trajectory Tracking Control Design}
We first define the posture and velocity errors with respect to a Frenet frame as follows (see Fig. \ref{transformation}).
\begin{align}\label{diffeomorphicTrans}
x_e & = (x - x_d)\cos \theta_d + (y - y_d)\sin \theta_d \nonumber \\
y_e & = (y - y_d)\cos \theta_d - (x - x_d)\sin \theta_d \nonumber \\
\theta_e & = \theta - \theta_d  \nonumber \\
v_e & = v_x - v_d
\end{align}
Taking the derivative and applying equations (\ref{final_dyn3})-(\ref{centerrealaxis}), the system finally becomes
\begin{align}
	\dot{x}_e &= v_x \cos \theta_e + \dot{\theta}_d y_e -v_d  \label{errdyn_1}\\
	\dot{y}_e &= v_x \sin \theta_e - \dot{\theta}_d x_e \label{errdyn_2} \\
	\dot{\theta}_e &= v_x c_\psi - \dot{\theta}_d \label{errdyn_3} \\
	\dot{v}_e &= \varphi_1 + \varphi_2(F_d - H_x) - \varphi_3 H_y - \dot{v}_d \label{errdyn_4} \\
	\dot{c_\psi} &=  (\frac{\tan\psi}{L})' = (\frac{1}{L} + Lc_\psi^2)\dot{\psi}\label{errdyn_5} \\
	\dot{\psi} &= (u_2 - \psi) \frac{1}{\tau} \label{errdyn_6} \\
	 F_d &= \left\{
	\begin{array}{c l}	
	F_p&= u_1\boldsymbol{V}^T\boldsymbol{\Phi}, or \label{errdyn_7} \\
	F_b&= u_3n_b
	\end{array}\right.
\end{align}
Therefore, when all the states are available, the tracking control objective is to find control laws for throttle opening rate $u_1$, steering input $u_2$ and braking pressure $u_3$, such that $\lim_{t \to \infty }{[x_e(t), y_e(t), \theta_e(t), v_e(t)]} = \boldsymbol{0}$.

Consider, first, only the truncated system (\ref{errdyn_1}) - (\ref{errdyn_4}), and pretend that $c_\psi$ in (\ref{errdyn_3}) and $F_d$ in (\ref{errdyn_4}) can be directly manipulated by $\omega_1$ and $\omega_2$ respectively. We propose the following lemma.
\begin{lemma}
	The virtual control input $\omega_1$ and $\omega_2$ given by
	\begin{align}
	\omega_1  =& \frac{1}{\theta_e} [x_e (1 - \cos \theta_e) - y_e \sin\theta_e ] - k_{\theta}\theta_e + \frac{\dot{\theta}_d}{v_d} \label{con_w1}\\	
	\omega_2  =& H_x + \frac{1}{\varphi_2}(\varphi_3 H_y - \varphi_1 + \dot{v}_d - k_v v_e - x_e + k_\theta \theta_e^2 \nonumber \\
	&- \frac{\dot{\theta}_d}{v_d}\theta_e) \label{con_w2}
	\end{align}
    with $k_\theta,k_v > 0 $, makes $[x_e(t), y_e(t), \theta_e(t), v_e(t)] = \boldsymbol{0}$ of the partial system (\ref{errdyn_1}) - (\ref{errdyn_4}) globally asymptotically stable.
\end{lemma}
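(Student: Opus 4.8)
The plan is to use the direct Lyapunov method with the quadratic candidate
\begin{equation*}
V = \tfrac{1}{2}\left(x_e^2 + y_e^2 + \theta_e^2 + v_e^2\right),
\end{equation*}
which is positive definite and radially unbounded, and then to upgrade the resulting stability to global asymptotic convergence with an invariant-set argument. First I would substitute $c_\psi = \omega_1$ and $F_d = \omega_2$ into the truncated error dynamics (\ref{errdyn_1})--(\ref{errdyn_4}). The role of $\omega_2$ is transparent: it is built to cancel $\varphi_1$, the feedforward term $\dot v_d$, and the measured couplings through $H_x,H_y$, leaving the clean form $\dot v_e = -k_v v_e - x_e + k_\theta\theta_e^2 - (\dot\theta_d/v_d)\theta_e$. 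I would also note that the apparent singularity of $\omega_1$ at $\theta_e=0$ is removable, since $(1-\cos\theta_e)/\theta_e$ and $\sin\theta_e/\theta_e$ have finite limits, so $\omega_1$ is continuous everywhere.

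Next I would differentiate $V$ along the closed loop. The key observation is that the term $\theta_e^{-1}[x_e(1-\cos\theta_e)-y_e\sin\theta_e]$ inside $\omega_1$ is engineered precisely so that, after substituting $v_x = v_e + v_d$, the cross products arising from $x_e\dot x_e + y_e\dot y_e$ and from $\theta_e\dot\theta_e$ telescope: the $x_e v_x\cos\theta_e$ and $v_x x_e(1-\cos\theta_e)$ contributions collapse to $v_x x_e$, the $\sin\theta_e$ contributions cancel outright, and the surviving $x_e v_e$ and $(\dot\theta_d/v_d)\theta_e v_e$ terms are exactly annihilated by the matching terms in $v_e\dot v_e$. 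I expect this to leave
\begin{equation*}
\dot V = -k_\theta v_d\,\theta_e^2 - k_v v_e^2 \le 0,
\end{equation*}
where $v_x-v_e=v_d>0$ and $k_\theta,k_v>0$. This already establishes boundedness of all four error states and Lyapunov stability of the origin.

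The main obstacle is that $\dot V$ is only negative \emph{semi}definite---$x_e$ and $y_e$ do not appear---so convergence of the full state does not follow at once. I would close this gap with an invariant-set/Barbalat argument: boundedness of the states makes $\dot V$ uniformly continuous, so Barbalat's lemma (equivalently the LaSalle--Yoshizawa theorem) yields $\theta_e\to0$ and $v_e\to0$. On the set $\theta_e\equiv v_e\equiv 0$ the velocity-error equation forces $x_e\equiv 0$; and with $\theta_e=x_e=0$ the steering law reduces to $\omega_1 = -y_e + \dot\theta_d/v_d$, whence $\dot\theta_e = v_d\omega_1 - \dot\theta_d = -v_d\,y_e$. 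Imposing invariance ($\dot\theta_e\equiv0$) then forces $y_e\equiv0$ because $v_d>0$, so the largest invariant set inside $\{\dot V=0\}$ is the origin, giving global asymptotic stability. Since $v_d(t)$ and $\dot\theta_d(t)$ are time-varying, for full rigor I would phrase this last step through the non-autonomous refinements of LaSalle (Matrosov's theorem or LaSalle--Yoshizawa), and I would make explicit the standing assumption that $v_d$ is bounded away from zero---consistent with the paper's requirement that the desired speed remain strictly positive.
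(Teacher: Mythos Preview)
Your proposal is correct and follows essentially the same route as the paper: the identical quadratic Lyapunov candidate, the same computation giving $\dot V=-k_\theta v_d\,\theta_e^2-k_v v_e^2$, Barbalat to force $\theta_e,v_e\to0$, and then an invariance-type step exploiting $\dot v_e\equiv0$ and $\dot\theta_e\equiv0$ to recover $x_e\equiv0$ and $y_e\equiv0$. If anything, your version is slightly more careful than the paper's in spelling out the telescoping cancellations, in flagging the non-autonomous nature of the system (hence LaSalle--Yoshizawa/Matrosov rather than classical LaSalle), and in making explicit the standing hypothesis that $v_d$ is bounded away from zero.
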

\begin{proof}
	The scalar function $V_1$ is proposed to be a Lyapunov function candidate,
	\begin{equation}
		V_1 = \frac{1}{2}(x_e^2 + y_e^2 + \theta_e^2 + v_e^2)
	\end{equation}
	its time derivative is given by
	\begin{equation}
		\begin{aligned}
		\dot{V}_1 =& x_e\dot{x}_e + y_e\dot{y}_e + \theta_e \dot{\theta}_e + v_e \dot{v}_e  \nonumber \\
		=& x_e(v_x \cos \theta_e + \dot{\theta}_d y_e -v_d) + y_e(v_x \sin \theta_e - \dot{\theta}_d x_e) \nonumber \\	
		& + \theta_e(v_x\omega_1 - \dot{\theta}_d) + v_e [\varphi_1 + \varphi_2(\omega_2 - H_x) \nonumber \\
		& - \varphi_3 H_y - \dot{v}_d] = -k_\theta v_d \theta_e^2 - k_v v_e^2 \leq 0
		\end{aligned}
	\end{equation}
Considering the boundness of variables inside the system, from Barbalat's lemma \cite{19}, we can easily show that $\lim_{t \to \infty } \dot{V_1} = 0$. Then,
$$ \lim_{t \to \infty } \theta_e = 0, \lim_{t \to \infty} v_e = 0$$
with $F_d = \omega_2$ and $v_e \equiv 0 \implies \dot{v}_e \equiv 0 $,  from (\ref{con_w2}) and (\ref{errdyn_4}), we get
\begin{equation}
	  x_e = 0 \label{xe_zero}
\end{equation}
with $c_\psi = \omega_1$ and $\theta_e \equiv 0 \implies \dot{\theta}_e \equiv 0 $, from (\ref{con_w1}) and (\ref{errdyn_3}), we get
\begin{equation}
y_e = 0 \label{ye_zero}
\end{equation}
Notice that in (\ref{con_w1}), terms $\sin\theta_e/\theta_e$ and $(1 - \cos\theta_e) / \theta_e$ have removable singularities and are accurately implemented with Taylor series approximation. Therefore, we can conclude the asymptotic stability of  $[x_e(t), y_e(t), \theta_e(t), v_e(t)] = \boldsymbol{0}$
\end{proof}

\subsubsection{Control Design for Throttle and Steering}

First consider the situation when the desired driving force $\omega_2 \geqslant 0 $, where $F_d = F_p = u_1\boldsymbol{V}^T\boldsymbol{\Phi}$.

Following the backstepping procedure \cite{19}, we further design the real control input $u_1$, $u_2$ such that $c_\psi$ and $F_p$ converge to the virtual control law $\omega_1$ and $\omega_2$ respectively. With the known polynomial coefficients $\boldsymbol{\Phi}$ and measurable velocity vector $\boldsymbol{V}$, driving force $F_d$ can be directly manipulated by $u_1$.  
Define the distance between $c_\psi$ and its goal $\omega_1$ as
\begin{equation}
	\delta_\psi = c_\psi - \omega_1 \label{delta_steering}
\end{equation}
we propose the following theorem,
\begin{theorem}
	When $\omega_2 \geqslant 0$, the control law for the throttle input $u_1$ and steering input $u_2$ given by
	\begin{align}
	u_1	&= \omega_2(\boldsymbol{V}^T \boldsymbol{\Phi})^{-1} \label{throttle_control}\\
	u_2 &= \frac{\tau}{\frac{1}{L} + Lc_\psi^2} (\dot{\omega}_1 - v_x \theta_e - k_\psi\delta_\psi ) + \psi
	\end{align}
	with $k_\psi > 0$, forces $[x_e,y_e,\theta_e,v_e,\delta_\psi]$ to asymptotically converge to zero.
\end{theorem}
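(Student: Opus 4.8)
The plan is to proceed by backstepping, treating $\delta_\psi = c_\psi - \omega_1$ as the error between the actual steering curvature and the virtual target furnished by Lemma~1, and to augment the Lyapunov function $V_1$ accordingly. First I would dispose of the throttle channel, which is purely algebraic: when $\omega_2 \geqslant 0$ we have $F_d = F_p = u_1\boldsymbol{V}^T\boldsymbol{\Phi}$, so the law $u_1 = \omega_2(\boldsymbol{V}^T\boldsymbol{\Phi})^{-1}$ makes $F_d = \omega_2$ exactly. Consequently the velocity-error equation (\ref{errdyn_4}) coincides verbatim with the one analyzed in the lemma, and the $v_e$-channel contributes precisely the $-k_v v_e^2$ term with no further effort.

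The substantive part is the steering. Writing $c_\psi = \omega_1 + \delta_\psi$ in (\ref{errdyn_3}) gives $\dot{\theta}_e = (v_x\omega_1 - \dot{\theta}_d) + v_x\delta_\psi$, so recomputing the derivative of $V_1$ exactly as in the lemma now leaves the single leftover cross term $v_x\theta_e\delta_\psi$. I would then take the augmented candidate $V_2 = V_1 + \tfrac12\delta_\psi^2$ and differentiate $\delta_\psi$ via (\ref{errdyn_5})--(\ref{errdyn_6}), i.e. $\dot{\delta}_\psi = (\tfrac1L + Lc_\psi^2)\dot{\psi} - \dot{\omega}_1$. Substituting the proposed $u_2$ cancels the curvature prefactor and collapses this to $\dot{\delta}_\psi = -v_x\theta_e - k_\psi\delta_\psi$, which is engineered precisely so that $\delta_\psi\dot{\delta}_\psi = -v_x\theta_e\delta_\psi - k_\psi\delta_\psi^2$ annihilates the leftover cross term. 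This yields $\dot{V}_2 = -k_\theta v_d\theta_e^2 - k_v v_e^2 - k_\psi\delta_\psi^2 \leqslant 0$.

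Finally, I would reuse the invariance argument from the lemma to upgrade negative semidefiniteness to convergence of the whole error vector. Since $V_2$ is bounded below and nonincreasing, all signals stay bounded, and Barbalat's lemma gives $\dot{V}_2 \to 0$, hence $\theta_e, v_e, \delta_\psi \to 0$. Imposing $v_e \equiv 0 \Rightarrow \dot{v}_e \equiv 0$ in (\ref{errdyn_4}) with $F_d = \omega_2$ recovers $x_e = 0$ exactly as in (\ref{xe_zero}); and $\theta_e \equiv 0$ together with $\delta_\psi \to 0$ forces $c_\psi \to \omega_1$, so $\dot{\theta}_e \equiv 0$ in (\ref{errdyn_3}) yields $y_e = 0$ as in (\ref{ye_zero}), using $v_d > 0$. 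The main obstacle is the bookkeeping of the backstepping cross term: one must check that the $v_x\delta_\psi$ injected into $\dot{\theta}_e$ and the $-v_x\theta_e$ injected into $\dot{\delta}_\psi$ cancel identically rather than merely up to sign, which is exactly what the $-v_x\theta_e$ feedforward inside $u_2$ secures. Everything else reduces to forming $\dot{\omega}_1$, which is well defined thanks to the removable singularities of $\sin\theta_e/\theta_e$ and $(1-\cos\theta_e)/\theta_e$ noted in the lemma.
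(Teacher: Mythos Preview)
Your proposal is correct and follows essentially the same route as the paper: augmenting $V_1$ to $V_2 = V_1 + \tfrac12\delta_\psi^2$, using the algebraic throttle law to make $F_d=\omega_2$ exactly, substituting $c_\psi=\omega_1+\delta_\psi$ to expose the cross term $v_x\theta_e\delta_\psi$, and choosing $u_2$ so that $\dot\delta_\psi$ cancels it and leaves $\dot V_2=-k_\theta v_d\theta_e^2-k_v v_e^2-k_\psi\delta_\psi^2$. The closing Barbalat-plus-invariance step recovering $x_e=0$ and $y_e=0$ is likewise identical to the paper's argument.
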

\begin{proof}
	From (\ref{errdyn_5}) and (\ref{errdyn_6}), the time derivative of (\ref{delta_steering}) yields
	\begin{equation}
		\dot{\delta_\psi} = \dot{c}_\psi - \dot{\omega}_1 = (\frac{1}{L} + Lc_\psi^2) (u_2 - \psi) \frac{1}{\tau} - \dot{\omega}_1	\label{deltapsi_derivative}
	\end{equation}
   Define the new composite Lyapunov function candidate
   $$ V_2 = V_1 + \frac{1}{2} \delta_\psi^2 $$
   its time derivative yields
   \begin{align}
   \dot{V}_2 =& \dot{V}_1 + \delta_\psi \dot{\delta}_\psi  \\
             =& x_e(v_x \cos \theta_e + \dot{\theta}_d y_e -v_d) + y_e(v_x \sin \theta_e - \dot{\theta}_d x_e) \nonumber \\	
             & + \theta_e[v_x(\omega_1+\delta_\psi) - \dot{\theta}_d] + v_e [\varphi_1 + \varphi_2(u_1\boldsymbol{V}^T\boldsymbol{\Phi} - H_x) \nonumber\\
             &- \varphi_3 H_y - \dot{v}_d] + \delta_\psi(\dot{c}_\psi - \dot{\omega}_1 )     \nonumber \\
             =& -k_\theta v_d \theta_e^2 - k_v v_e^2 + \theta_e v_x \delta_\psi  \nonumber \\
             &+ \delta_\psi [(\frac{1}{L} + Lc_\psi^2) (u_2 - \psi) \frac{1}{\tau} - \dot{\omega}_1] \nonumber \\
             =& -k_\theta v_d \theta_e^2 - k_v v_e^2 - k_\psi \delta_\psi^2 \leq 0 \nonumber
   \end{align}
Analogous to Lemma 1, we can easily conclude that $\lim_{t \to \infty }\dot{V}_2 = 0$, which means
\begin{equation}
\lim_{t \to \infty } \theta_e = 0, \lim_{t \to \infty} v_e = 0,  \lim_{t \to \infty} \delta_\psi = 0
\end{equation}
With $\delta_\psi = 0$, and from (\ref{delta_steering}), we have $c_\psi = \omega_1$. With (\ref{throttle_control}), $F_p$ is directly controlled to be $\omega_2$ . Therefore, (\ref{xe_zero}) and (\ref{ye_zero}) still hold. Hence, the statement in the theorem can be readily proved.
\end{proof}

Be noted again that the derivatives of the expressions $\sin\theta_e/\theta_e$ and $(1-\cos\theta_e)/\theta_e$ in $\dot{\omega}_1$ also have removable singularities, which poses no problem for accurate implementation.

\subsubsection{Control Design for Brake and Steering}
When the desired driving force $\omega_2 < 0 $, then $ F_d = F_b = u_3 n_b $. We need to design steering input $u_2$ and braking input $u_3$, such that $c_\psi$ and $F_b$ converge to the virtual control input $\omega_1$ and $\omega_2$ respectively. With knowledge of the value $n_b$, $F_b$ can be directly manipulated by control input $u_3$. $u_2$ needs to be further designed.
We propose the following theorem,
\begin{theorem}
	When $\omega_2 < 0$, the control law for steering input $u_2$ and brake input $u_3$ given by
	\begin{align}	
	u_2 &= \frac{\tau}{\frac{1}{L} + Lc_\psi^2} (\dot{\omega}_1 - v_x \theta_e - k_\psi\delta_\psi  ) + \psi \\
	u_3	&= \omega_2/n_b
	\end{align}
	asymptotically stabilizes $ [x_e,y_e,\theta_e,v_e] = \boldsymbol{0} $ of the system (\ref{errdyn_1})-(\ref{errdyn_7}).
\end{theorem}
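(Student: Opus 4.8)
The plan is to recognize that Theorem 2 is the braking dual of Theorem 1 and to reuse essentially the same composite Lyapunov function and backstepping argument. The only structural difference between the two branches is the physical mechanism that realizes the driving force $F_d$: in Theorem 1 it is the propulsion force $F_p = u_1\boldsymbol{V}^T\boldsymbol{\Phi}$, whereas here it is the braking force $F_b = u_3 n_b$. The key observation I would make first is that the prescribed control $u_3 = \omega_2/n_b$ yields $F_b = u_3 n_b = \omega_2$, so that $F_d = \omega_2$ holds \emph{exactly}, just as $u_1 = \omega_2(\boldsymbol{V}^T\boldsymbol{\Phi})^{-1}$ gives $F_p = \omega_2$ in (\ref{throttle_control}). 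Consequently the velocity-error equation (\ref{errdyn_4}) collapses to the identical closed form that drove the analysis of Lemma 1.

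Next I would reuse the composite candidate $V_2 = V_1 + \frac{1}{2}\delta_\psi^2$, with $\delta_\psi = c_\psi - \omega_1$ as in (\ref{delta_steering}). Since the steering law $u_2$ in this theorem is verbatim the one used in Theorem 1, the computation of $\dot{\delta}_\psi$ from (\ref{errdyn_5})--(\ref{errdyn_6}) transfers unchanged, reducing to $\dot{\delta}_\psi = -v_x\theta_e - k_\psi\delta_\psi$ and hence $\delta_\psi\dot{\delta}_\psi = -v_x\theta_e\delta_\psi - k_\psi\delta_\psi^2$. Substituting $F_d = \omega_2$ into $\dot{v}_e$ makes the $v_e$-channel contribute $-k_v v_e^2$, exactly as in Lemma 1, while writing $c_\psi = \omega_1 + \delta_\psi$ in (\ref{errdyn_3}) produces the cross-coupling term $\theta_e v_x\delta_\psi$ in $\dot{V}_1$.

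Collecting terms, the cross term $\theta_e v_x\delta_\psi$ cancels identically against the matching piece of $\delta_\psi\dot{\delta}_\psi$, leaving
$$\dot{V}_2 = -k_\theta v_d\theta_e^2 - k_v v_e^2 - k_\psi\delta_\psi^2 \leq 0,$$
which is precisely the negative-semidefinite form obtained in Theorem 1. From here the remaining argument is identical: boundedness of the signals together with Barbalat's lemma gives $\theta_e,\,v_e,\,\delta_\psi \to 0$; then $\delta_\psi = 0$ forces $c_\psi = \omega_1$ via (\ref{delta_steering}), and since $F_d = \omega_2$ the hypotheses leading to (\ref{xe_zero}) and (\ref{ye_zero}) are satisfied, so $x_e \to 0$ and $y_e \to 0$.

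The only point requiring care is not a genuine obstacle but a verification: one must confirm that $F_d = \omega_2$ is achieved exactly in the braking branch, which follows immediately from $F_b = u_3 n_b$ and $u_3 = \omega_2/n_b$. Once this is established, every line of the Theorem 1 Lyapunov computation applies without modification. I would therefore present the proof compactly, as the observation that the braking control reproduces the same closed-loop error dynamics and the same Lyapunov derivative, and then invoke the reasoning of Theorem 1 to conclude asymptotic convergence of $[x_e,y_e,\theta_e,v_e]$ to $\boldsymbol{0}$.
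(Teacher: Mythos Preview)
Your proposal is correct and mirrors the paper's own proof almost exactly: the paper reuses the composite Lyapunov function $V_2 = V_1 + \tfrac{1}{2}\delta_\psi^2$, notes that with $u_3 = \omega_2/n_b$ and the same steering law one obtains $\dot V_2 = -k_\theta v_d\theta_e^2 - k_v v_e^2 - k_\psi\delta_\psi^2 \le 0$, and then invokes the arguments of Lemma~1 and Theorem~1 verbatim. If anything, your write-up is more explicit than the paper's, which dispatches the result in three lines by citing the earlier computations.
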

\begin{proof}
	Using again the lyapunov function candidate
   $$ V_2 = V_1 + \frac{1}{2} \delta_\psi^2 $$
	with (\ref{con_w1}), (\ref{con_w2}), (\ref{deltapsi_derivative}) and analogous to Lemma 1, we get
	\begin{equation}
	\dot{V}_2 = -k_\theta v_d \theta_e^2 - k_v v_e^2 - k_\psi \delta_\psi^2 \leq 0
	\end{equation}
	Mimicking the same arguments applied in the proof of Lemma 1 and Theorem 1, we can readily show that $ [x_e,y_e,\theta_e,v_e] = \boldsymbol{0} $ is asymptotically stable.
\end{proof}
\begin{figure}[!b] 	\label{experiment_fig}
	\centering
	\subfigure{\includegraphics[width=0.393\columnwidth]{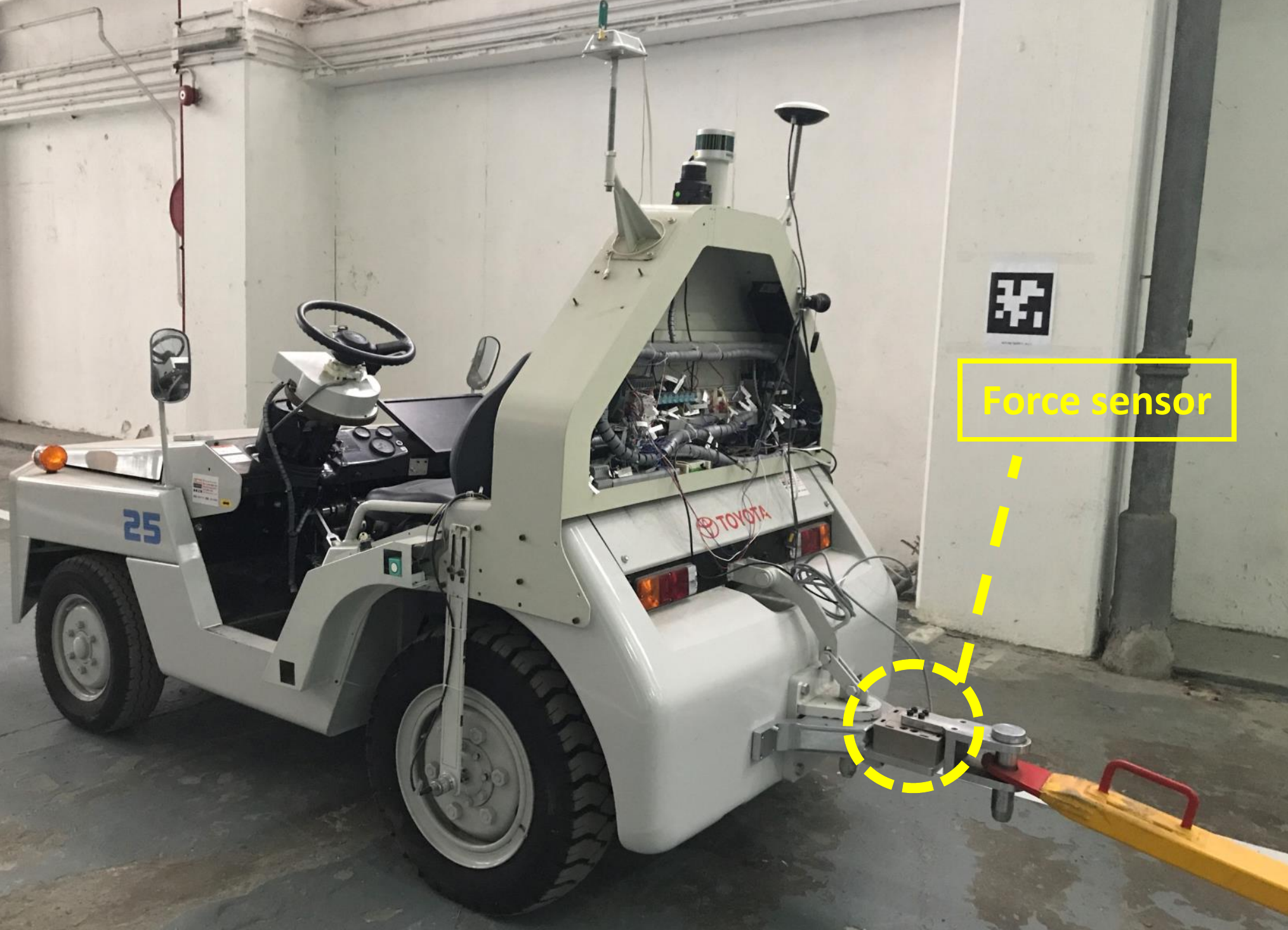}}
	\subfigure{\includegraphics[width=0.49\columnwidth]{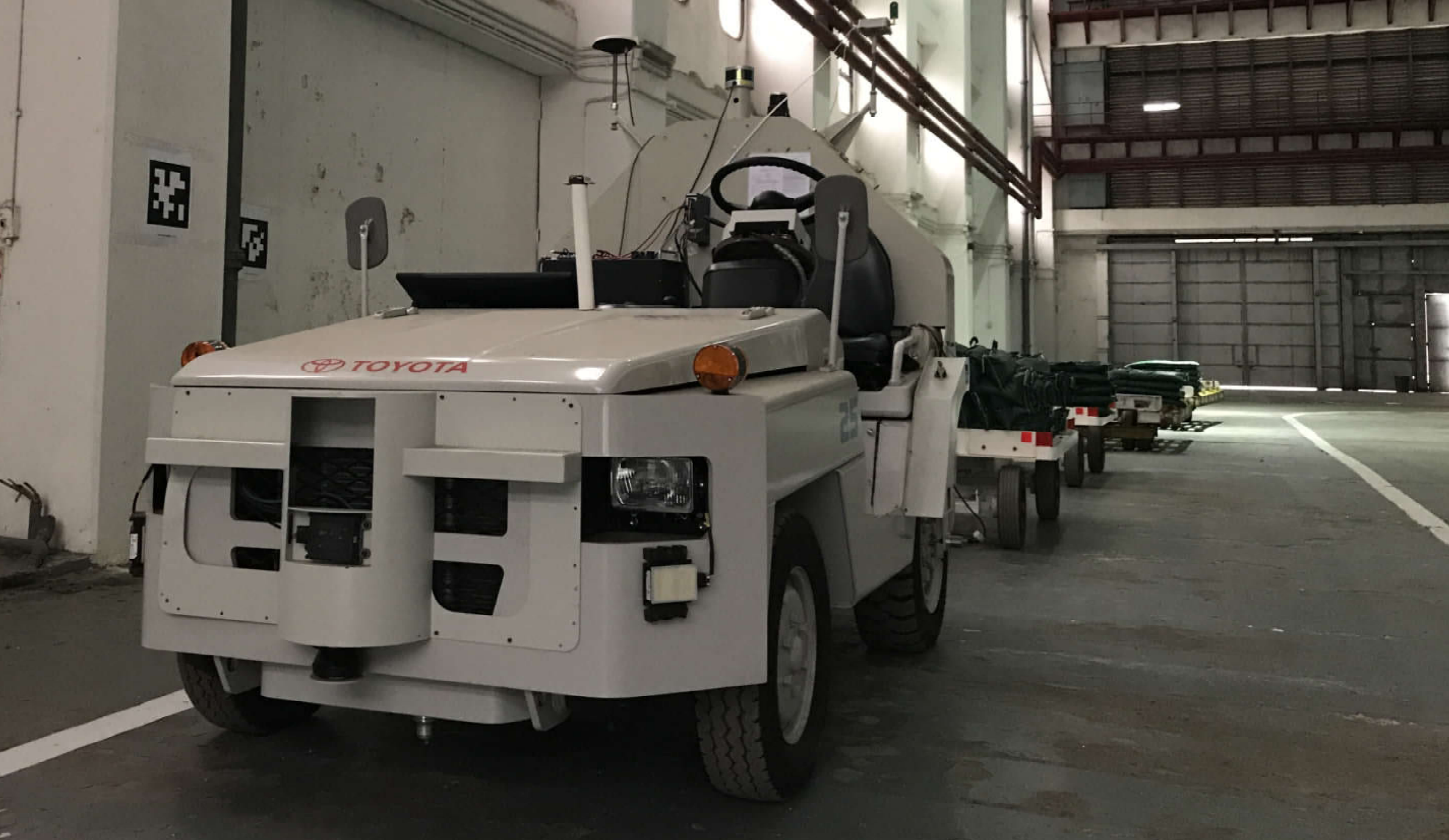}}\\
	\vspace{-0.2cm}
	\caption{The autonomous full-size industrial tractor-trailers vehicle.}
\end{figure}

\begin{figure}[]\label{experiment_1_fig}
	\centering
	\subfigure{\includegraphics[width=1\columnwidth]{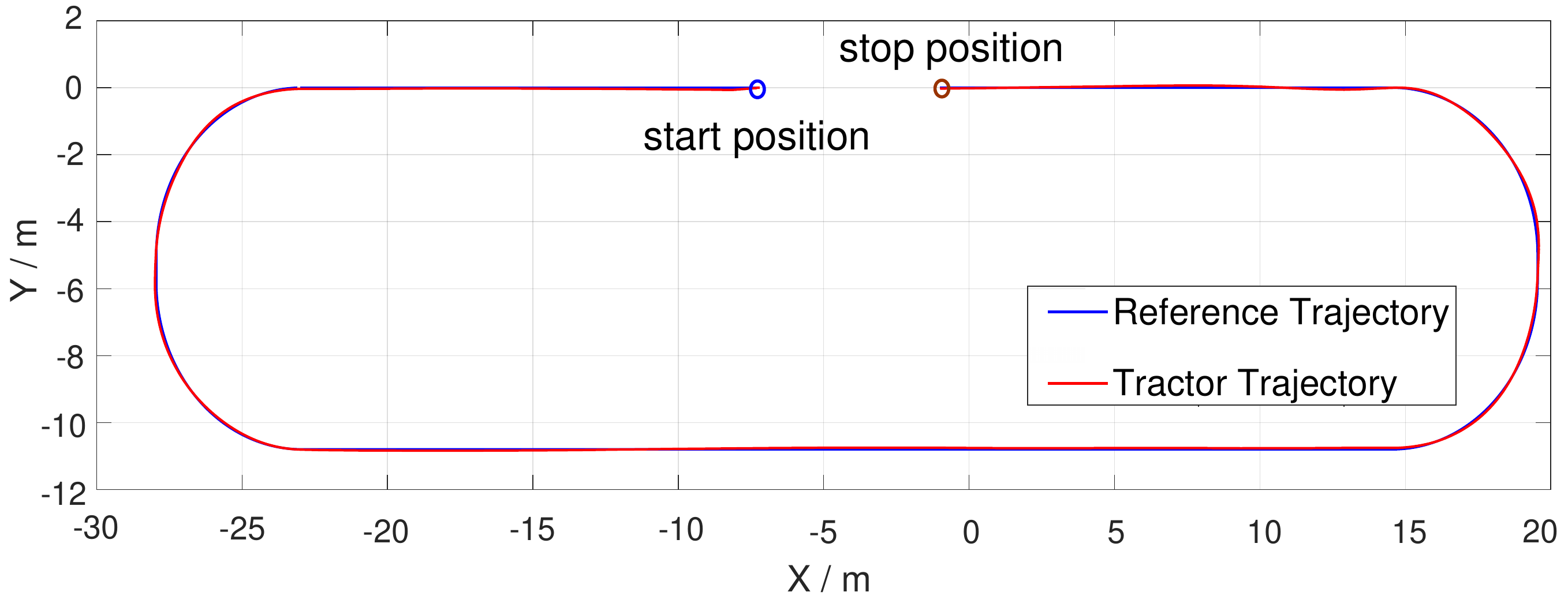}}
	\subfigure{\includegraphics[width=0.49\columnwidth]{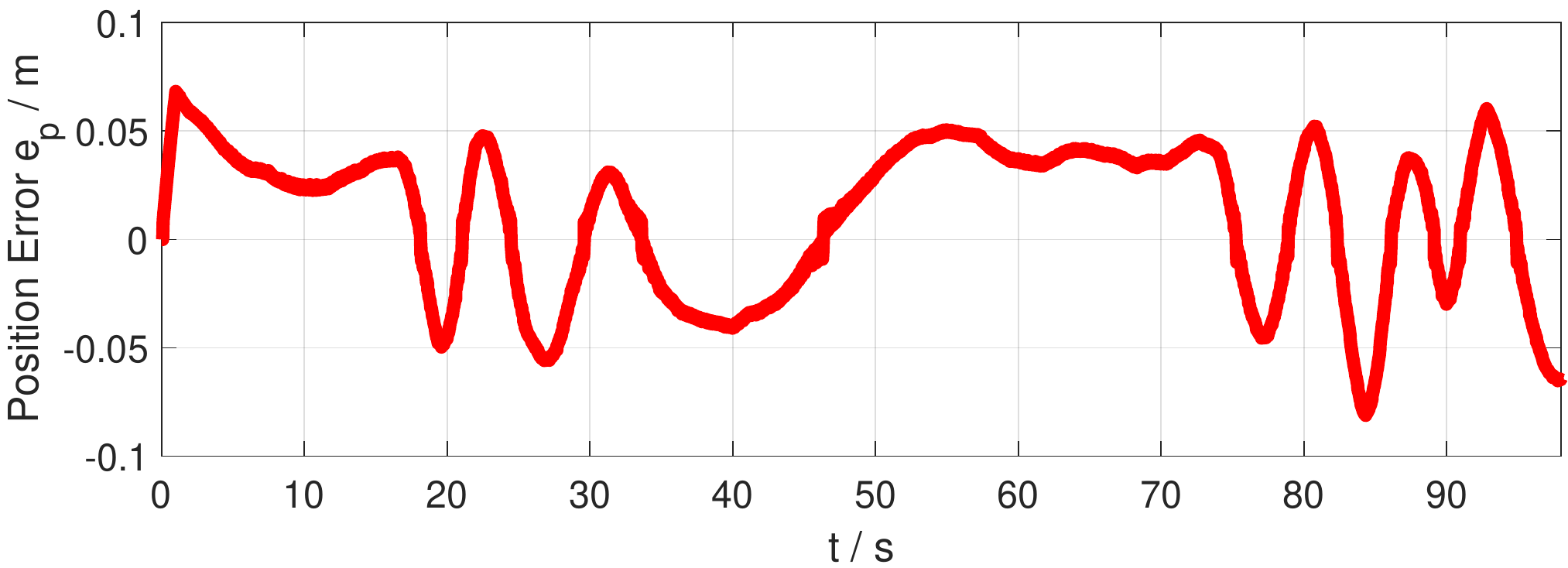}}
	\subfigure{\includegraphics[width=0.49\columnwidth]{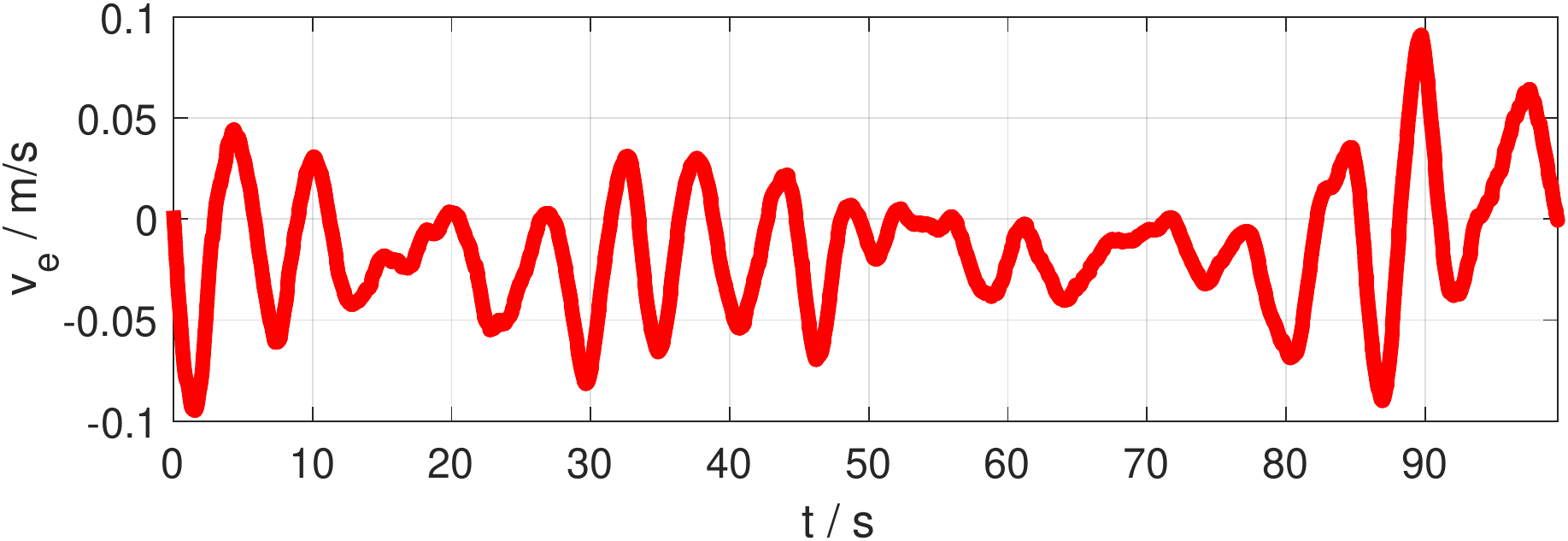}}\\
	\subfigure{\includegraphics[width=0.49\columnwidth]{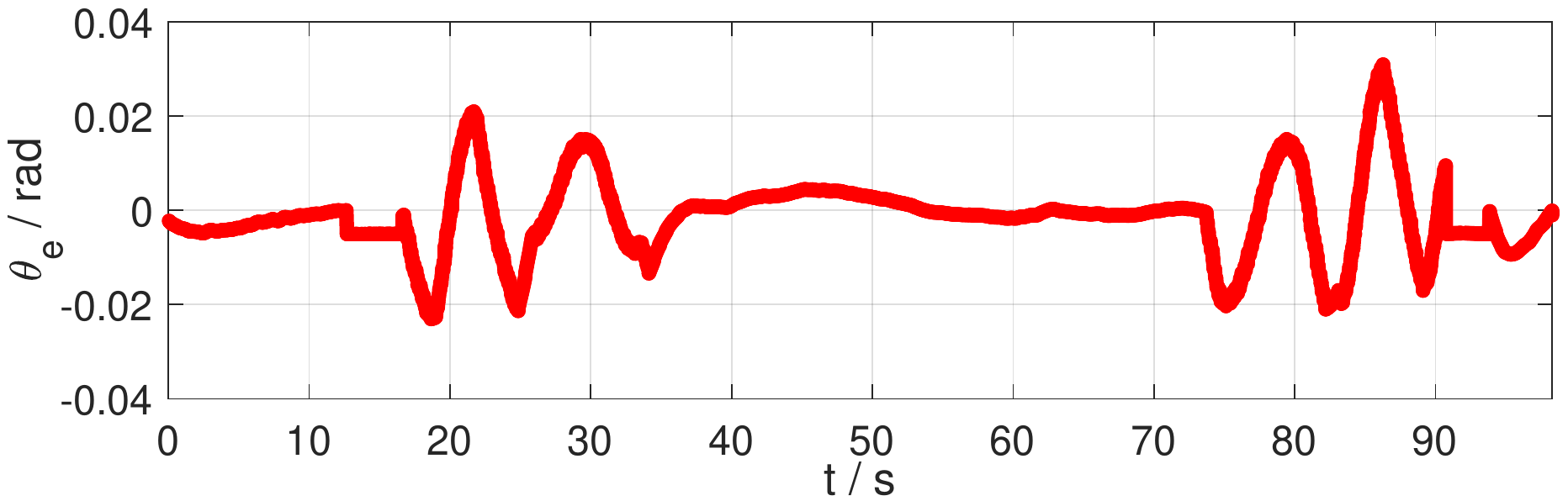}}
	\subfigure{\includegraphics[width=0.49\columnwidth]{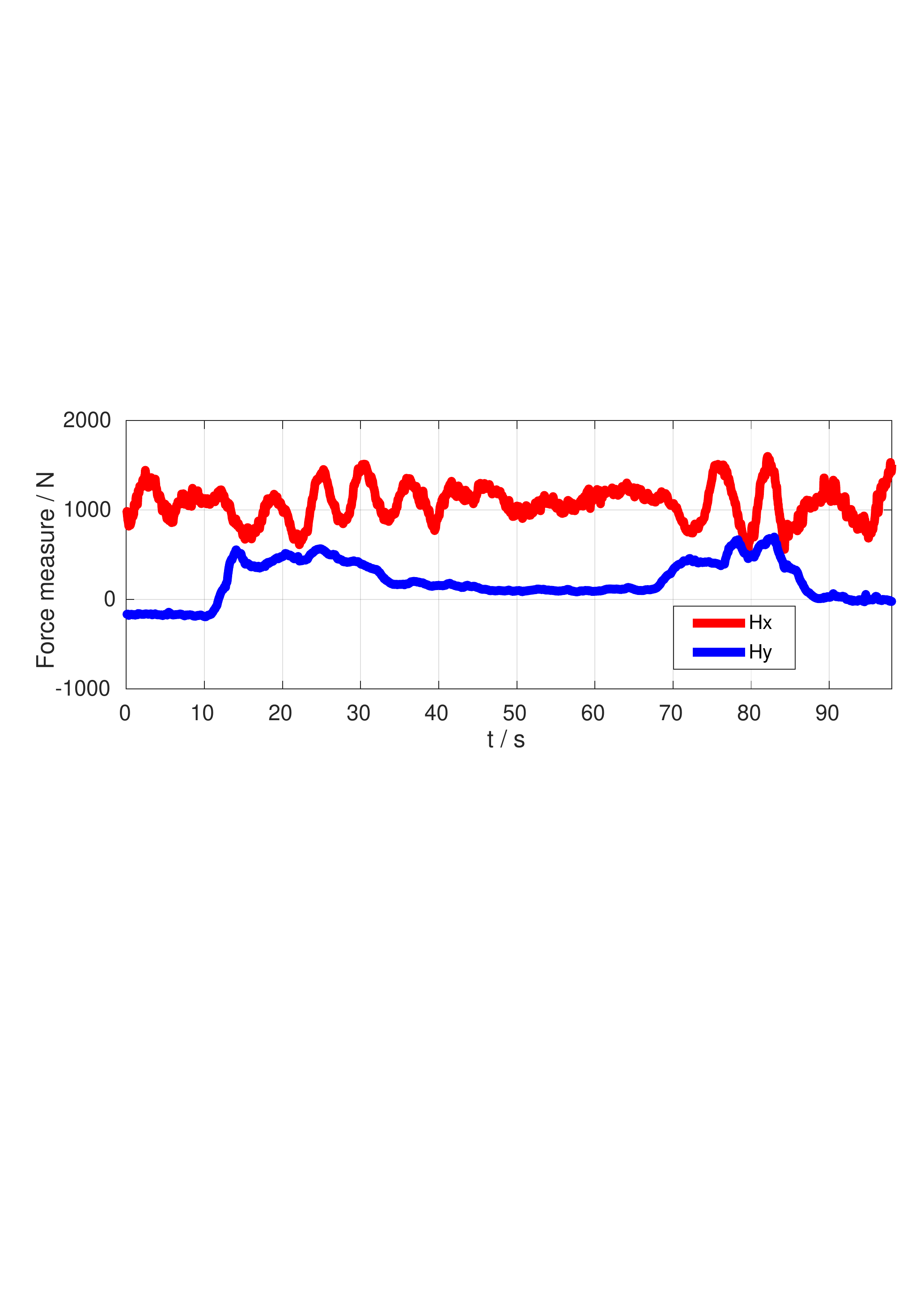}}\\
	\vspace{-0.2cm}
	\caption{Trajectory control performance with two full trailers and payload.}
\end{figure}

\begin{figure}[] 	\label{experiment_2_fig}
	\centering
	\subfigure{\includegraphics[width=1\columnwidth]{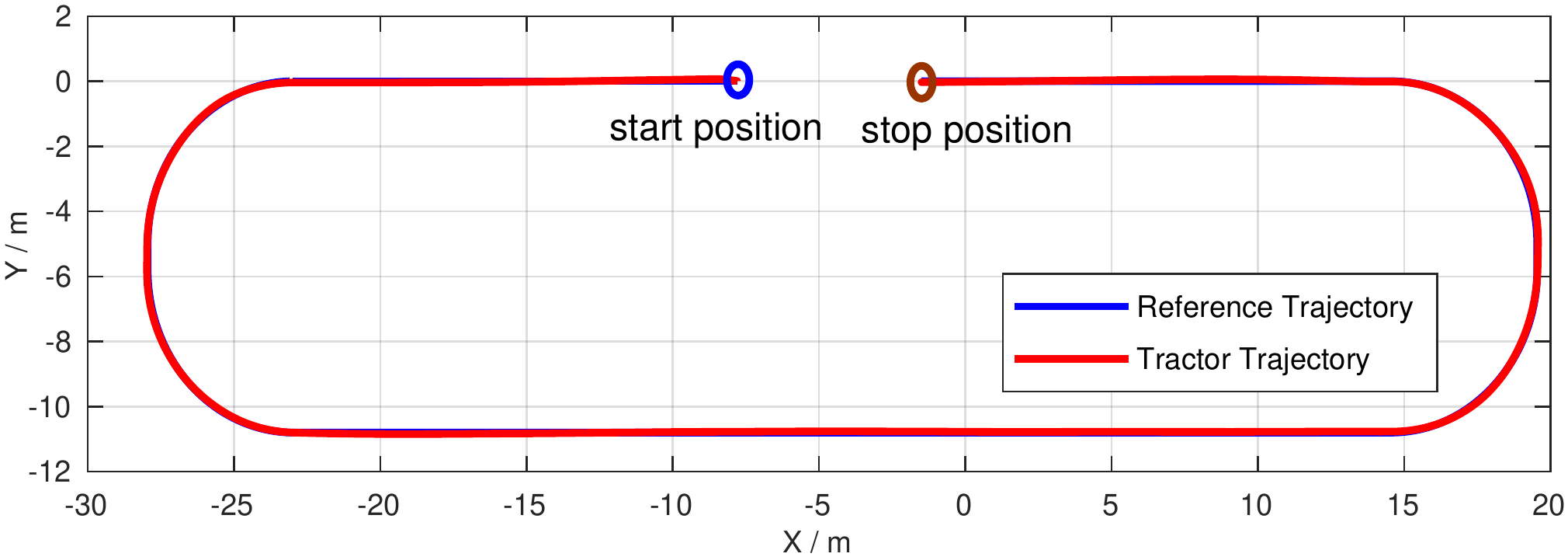}}
	\subfigure{\includegraphics[width=0.49\columnwidth]{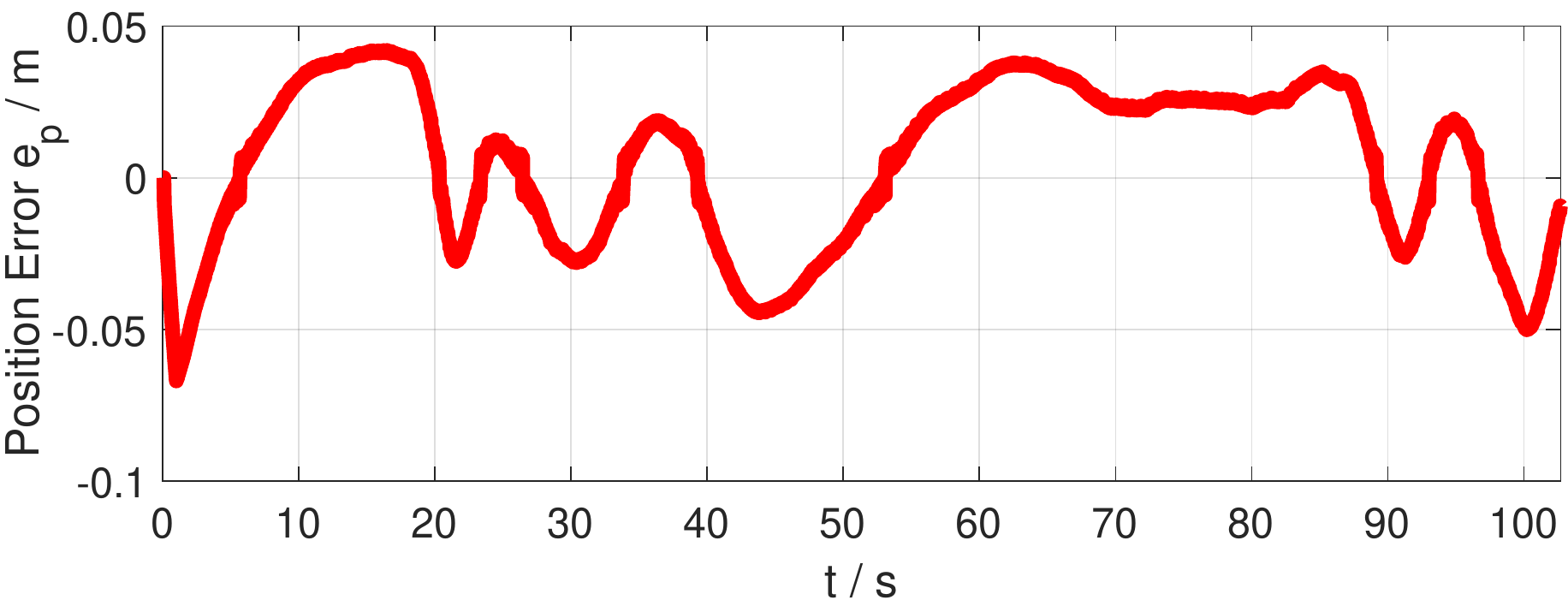}}
	\subfigure{\includegraphics[width=0.49\columnwidth]{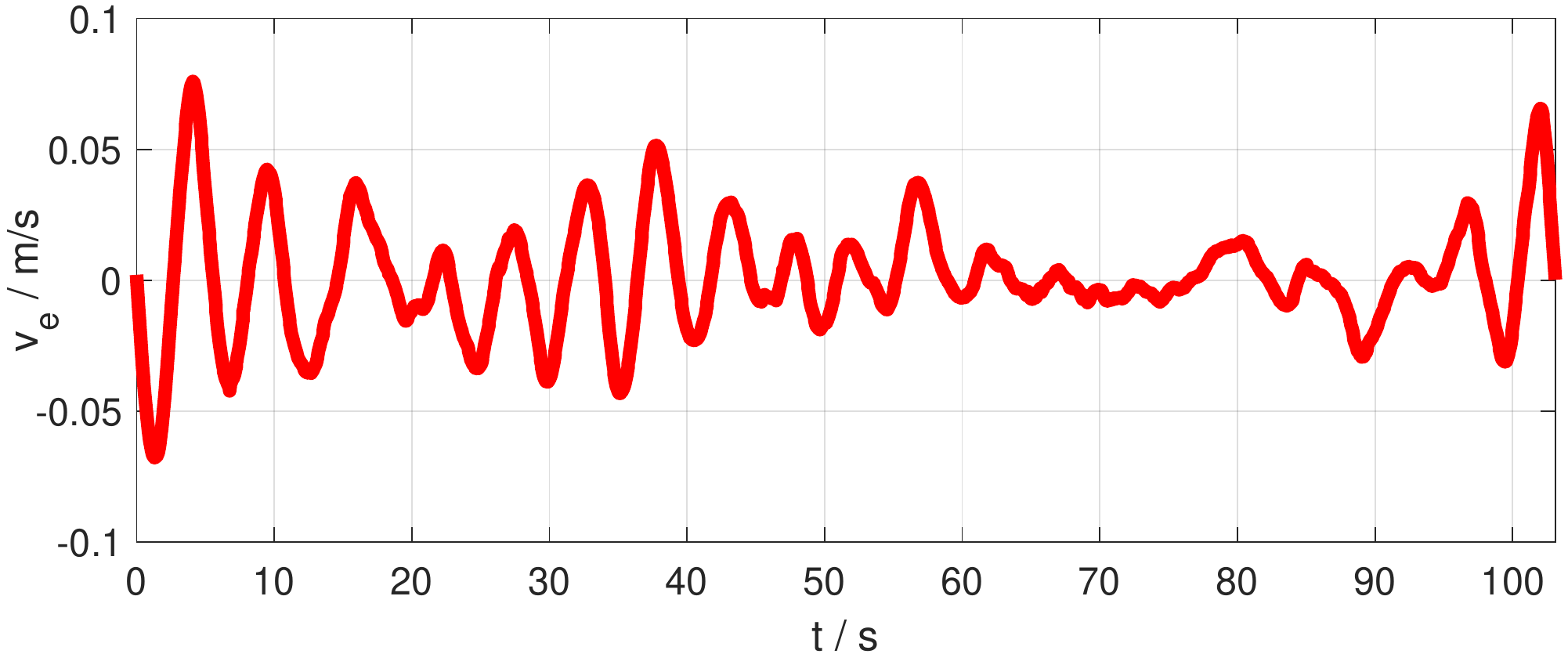}}\\
	\subfigure{\includegraphics[width=0.49\columnwidth]{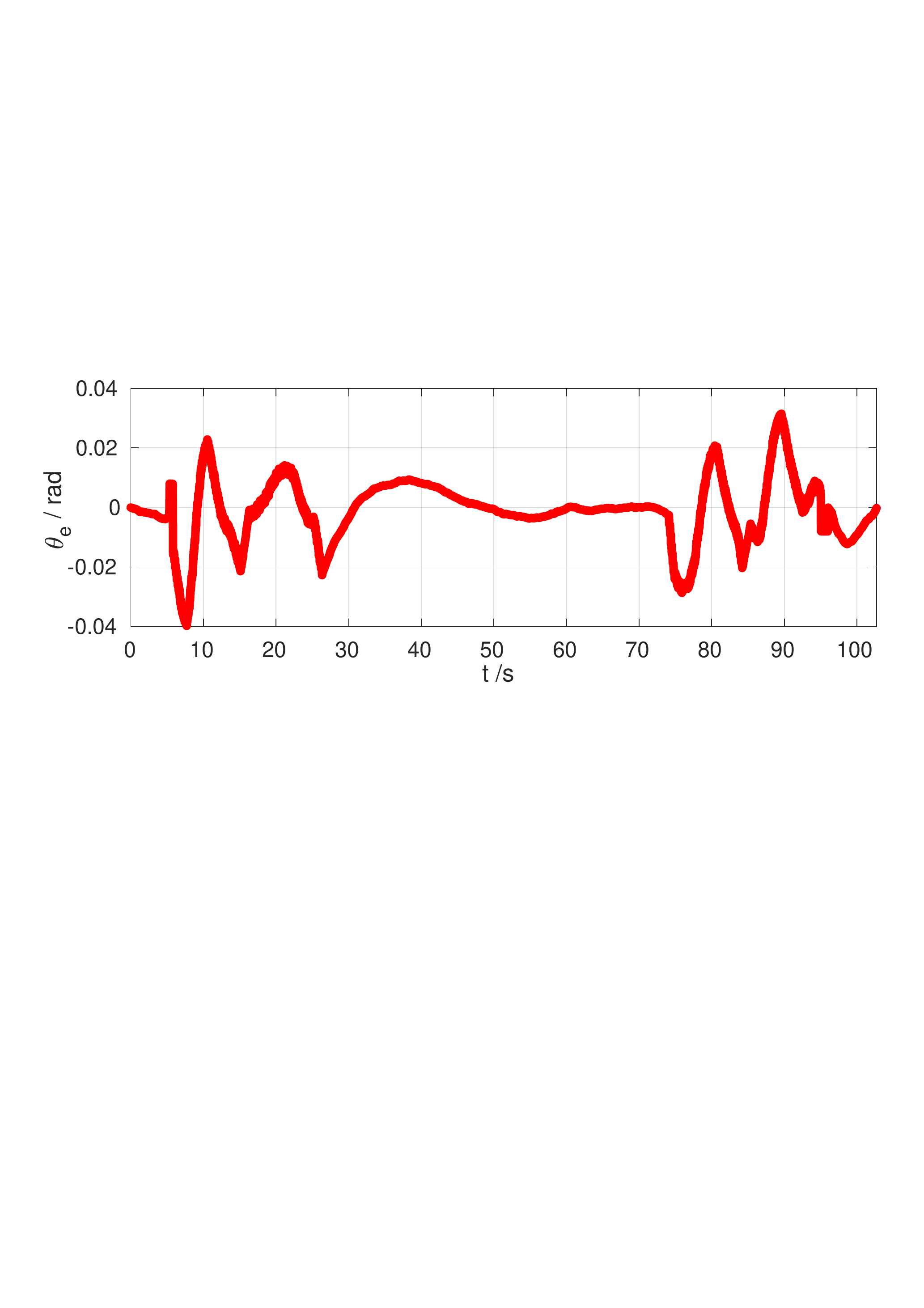}}
	\subfigure{\includegraphics[width=0.49\columnwidth]{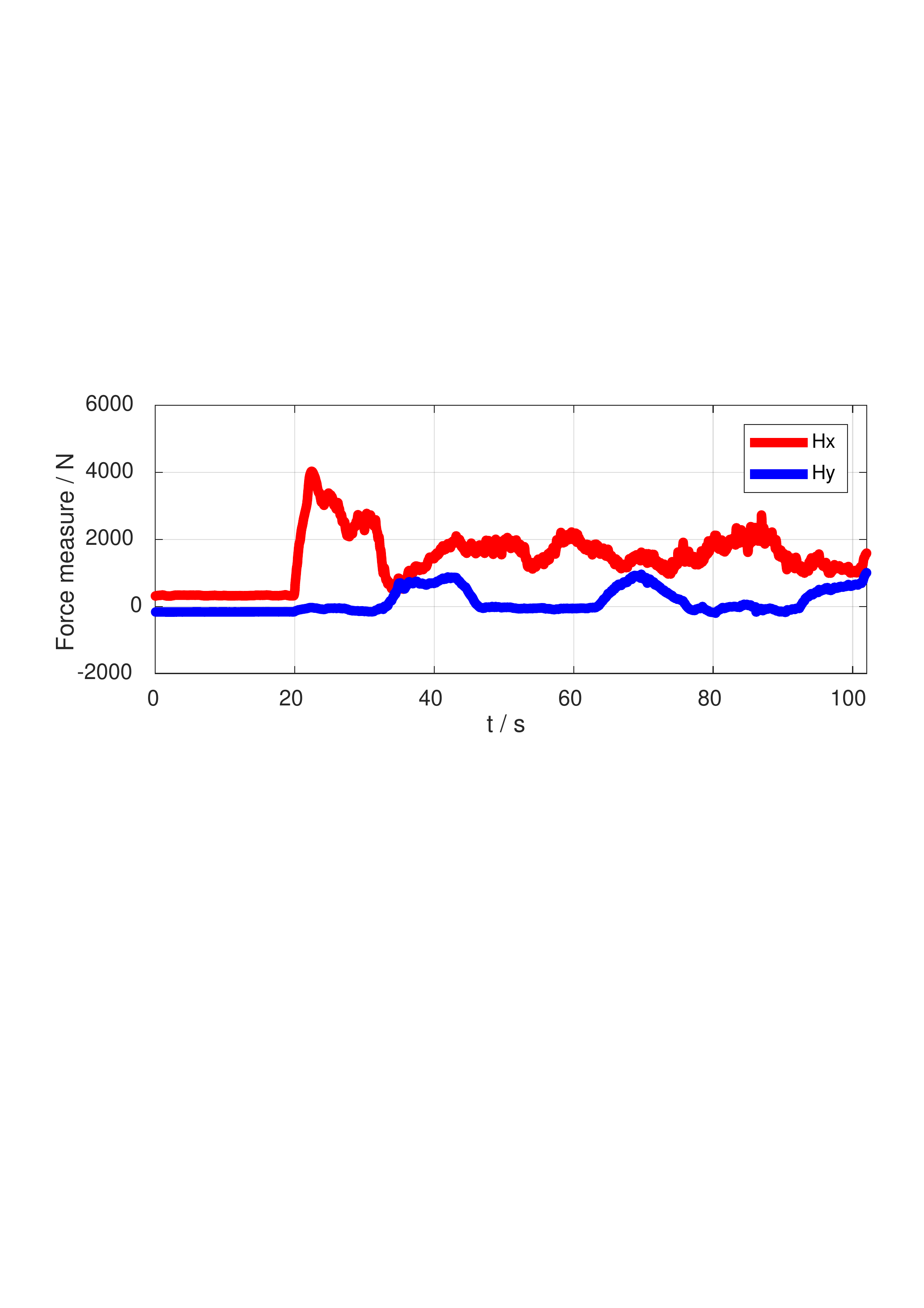}}\\
	\vspace{-0.2cm}
	\caption{Trajectory control performance with four full trailers and payload.}
\end{figure}

\subsubsection{Discussion}
 Since (\ref{diffeomorphicTrans}) is a diffeomorphic coordinate transformation, the convergence of error vector $[x_e,y_e,\theta_e,v_e]$ to zero guarantees the convergence of trajectory tracking.


\section{Practical Implementation and Results}
In this section, we implement the derived trajectory tracking controller with a full-size industrial tractor-trailers vehicle on an even test site inside a big warehouse (Fig. 5). The tractor is the \textit{Toyota 52-2TD25} model and is retrofitted with drive-by-wire throttling, braking and steering. The tractor as well as the full trailers (dollies) in the experiments are all standard models applied in the Hong Kong International Airport for luggage or cargo transportation. All the physical parameters of the tractor (mass, inertia, COG and dimensions) can be determined with standard measurements.

We adopt the 3-axis force sensor \textit{ME K3D160}  with a measuring range of $\pm50$KN (enough for normal industrial task execution) in each mutually perpendicular axes. Only two forces parallel to the road surface are counted. The force sensor is fixed on the tractor and is connected to the trailer bar via an intermediate link. Due to the small dimension of the link, we can assume that the trailer is directly connected to the force sensor with a hitch joint.
Yaw rate and acceleration measurements are obtained with \textit{Xsens MTi-300-AHRS} IMU. The self-localization process (providing position and orientation) is conducted by fusing sensory information from UWB, 3D laser scanner, cameras and inertial and odometry measurements. The coefficient vector is calculated off-line as $\boldsymbol{\Phi} = \begin{bmatrix}26.2 & -9.999 & 3.018 & -1.041 & 0.2354 & -0.021 \end{bmatrix} ^T$. 

We remove the obstacles and make the tractor running in a roundabout way to eliminate possible negative effects of the trajectory planning level on the tracking results. 
We conducted two experiments with different number and different size of trailers. 
Fig. 6 shows the results of the experiment where the tractor tows two relatively small full trailers. The weight of one small full trailer is 630kg. Besides, sandbags with total weight of 2000kg are loaded on the first trailer to act as extra payload. The desired trajectory is planned online and keeps the tractor running with a velocity of $1$ m/s along the trajectory. The total running time is $98$s. The position error is calculated by $\vert e_p(t)\vert = \sqrt{x_e^2(t) + y_e^2(t)}$, with its sign being negative when the tractor is on the left of the desired trajectory. Fig. 7 shows the results of the second experiment where the tractor tows two extra big full trailers, with each weighing 1300kg. The running time is $103$s and the constant desired velocity is 0.8 m/s. 
Above results demonstrate precise and robust performance of the proposed trajectory controller under different physical properties and dynamic configurations of the industrial tractor-trailers vehicle.

\section{Conclusion}
In order to handle the complex dynamics of industrial vehicles with tractor-trailer structure and finally realize the autonomous trajectory tracking control, a force sensor is proposed to be installed at the connection between the tractor and the first trailer. The tractor's dynamic model that explicitly accounts for the measured forces has been derived. Both  throttle \& steering control and brake \& steering control are proposed. The practical implementation results on full-size industrial tractor-trailers vehicles have shown good performance of the proposed approaches in trajectory tracking and the handling of complex dynamics. 

\bibliography{IEEEabrv,IROS2019_ref}

\end{document}